\newtheorem{theorem}{Theorem}[section]
\newtheorem{corollary}[theorem]{Corollary}
\newtheorem{lemma}[theorem]{Lemma}
\theoremstyle{definition}
\newtheorem{example}[theorem]{Example}
\newtheorem{remark}[theorem]{Remark}
\newenvironment{smat}{\left[\begin{smallmatrix}}{\end{smallmatrix}\right]}
\title{The real nonnegative inverse eigenvalue problem is NP-hard 
\footnote{{\bf Keywords:} Nonnegative matrix, inverse eigenvalue problem, decision problem, P, NP, NP-hard, NP-complete.
}
\footnote{{\bf Mathematics subject classification:} 15A18, 15A29, 90C60, 68Q25.}
 \footnote{Supported by the Spanish Ministerio de Ciencia y Tecnología MTM2015-68805-REDT.}}
\author{Alberto Borobia, Roberto Canogar\\
\small Dpto. Matem\'{a}ticas, Universidad Nacional de Educaci\'on a Distancia (UNED), 28040 Madrid, Spain\\
\small e-mail: $aborobia@mat.uned.es$, $rcanogar@mat.uned.es$ \\ \\
\small \url{http://dx.doi.org/10.1016/j.laa.2017.02.010}}
\date{}
\begin{document}

\maketitle

\begin{abstract}

A  list of complex numbers is realizable if it is the spectrum of a nonnegative matrix. In 1949 Sule\v{\i}manova posed the nonnegative inverse eigenvalue problem (NIEP): the problem of determining which lists of complex numbers are realizable.  The  version for reals of the NIEP (RNIEP) asks for realizable lists of real numbers. In the literature there are many sufficient conditions or criteria for lists of real numbers to be realizable. We will present an unified account of these criteria. Then we will see that the decision problem associated to the RNIEP is NP-hard and we will study the complexity for the decision problems associated to known criteria.

\end{abstract}

\section{Introduction}

A   matrix  is \emph{nonnegative} if  all its entries  are nonnegative  numbers. The \emph{Real  Nonnegative Inverse Eigenvalue Problem}  (which we will denote as {\bf RNIEP}) asks for the characterization of all possible real spectra of nonnegative matrices. A list $\Lambda=(\lambda_1,\ldots,\lambda_n)$ of $n$ real numbers is said to be \emph{realizable} if there exists some nonnegative matrix $A\geq 0$ of order $n$ with spectrum   $\sigma(A)=\{\lambda_1,\ldots,\lambda_n\}$. With some abuse of notation, 
from now on we will use the expression $\sigma(A)=\Lambda$ or $\sigma(A)=(\lambda_1,\ldots,\lambda_n)$.  


For    $\Lambda=(\lambda_1,\ldots,\lambda_n)\in \mathbb{R}^n$  define 
$$\rho(\Lambda)=\max\{|\lambda_1|, \ldots, |\lambda_n|\} \quad \text{and} \quad \Sigma(\Lambda)=\lambda_1+ \cdots + \lambda_n.$$  
We will restrict to lists of monotonically nonincreasing real numbers, that is,  elements of the sets
$$
\mathbb{R}^n_{\downarrow} \equiv 
\big\{(\lambda_1,\ldots,\lambda_n)\in \mathbb{R}^n:  \lambda_1\geq \cdots \geq \lambda_n \big\}.
$$
If $\Lambda\in \mathbb{R}_{\downarrow}^{n}$ is the  spectrum of a nonnegative matrix $A$ then   $\Sigma(\Lambda)$ is the trace of $A$ (which implies that $\Sigma(\Lambda)\geq 0$) and $\rho(\Lambda)$ is  the Perron eigenvalue of $A$ (which implies that $\rho(\Lambda)=\lambda_1$). So the candidates to be a real spectrum of some nonnegative matrix  belong to the set $\Pi_\mathbb{R}=\Pi^1_\mathbb{R}\cup \Pi^2_\mathbb{R} \cup \cdots$ where
$$
\Pi^n_\mathbb{R}\equiv \big\{\Lambda=(\lambda_1,\ldots,\lambda_n)\in \mathbb{R}_{\downarrow}^n:    \Sigma(\Lambda) \geq 0; \  \rho(\Lambda)=\lambda_1 \big\}.
$$
The set of all real spectra of nonnegative matrices is $\Pi_{\text{RNIEP}}=\Pi^1_{\text{RNIEP}}\cup \Pi^2_{\text{RNIEP}} \cup \cdots$ where
$$
\Pi^n_{\text{RNIEP}}=\{\Lambda \in \Pi^n_\mathbb{R} : \exists \text{ a nonnegative matrix $A$ of order $n$ with }  \sigma({A})=\Lambda\}.
$$
The RNIEP asks for the characterization of  $\Pi_{\text{RNIEP}}$.  The complete   characterization of $\Pi^n_{\text{RNIEP}}$ is only  known for $n\leq 4$. Indeed this seems to be an intractable problem for large $n$.  Nevertheless  several subsets of $\Pi_{\text{RNIEP}}$ are known. These partial solutions are presented in the literature as \emph{criteria}, so that if  $\Lambda\in \Pi_\mathbb{R}$ satisfies the conditions that define the criterion $\mathcal{C}$ then $\Lambda\in \Pi_{\text{RNIEP}}$.  
For each criterion $\mathcal{C}$  we define the set  $\Pi_{\mathcal{C}}=\Pi^1_{\mathcal{C}}\cup \Pi^2_{\mathcal{C}} \cup \cdots$    where
$$
\Pi^n_{\mathcal{C}} \equiv\{\Lambda \in \Pi^n_\mathbb{R} : \Lambda \text{ satisfies the condition of the criterion } \mathcal{C} \} \subset \Pi^n_{\text{RNIEP}}.
$$

The RNIEP has associated the following \emph{decision problem}: the input is  a  list $\Lambda\in \Pi_\mathbb{R}$ and the output is `yes' if  $\Lambda\in \Pi_{\text{RNIEP}}$  or `no' if $\Lambda\not\in \Pi_{\text{RNIEP}}$. Similarly, each criterion $\mathcal{C}$ has associated a decision problem where  the input is  a  list $\Lambda\in \Pi_\mathbb{R}$ and the output is `yes' if $\Lambda\in \Pi_{\mathcal{C}}$   or no' if $\Lambda\not\in \Pi_{\mathcal{C}}$. The aim of this paper is to study the complexity of  these decision problems.

\section{A review of the main criteria for the RNIEP}

For each criteria $\mathcal{C}$ we will explicitly present   the set $\Pi_{\mathcal{C}}$.  We have divided this sets into four  different groups depending on the type of conditions:

\begin{description} \label{2.1}

\item[Group 1.] {\bf Sets $\Pi_{\mathcal{C}}$  whose lists are defined by a collection of linear inequalities.}

We  introduce the following notation associated to a given $\Lambda=(\lambda_1,\ldots,\lambda_n)\in \Pi_\mathbb{R}$:
\begin{itemize}
\item $p(\Lambda)$  is the number of nonnegative elements of $\Lambda$.
\item $q(\Lambda)$  is the number of negative elements of $\Lambda$.
\item $ \Psi(\Lambda)=\{i\in \{1,\ldots,\min\{p(\Lambda),q(\Lambda) \} \}: \lambda_i+\lambda_{n+1-i}< 0\}.$
\item $\displaystyle \psi_k(\Lambda)=    \sum_{i\in \Psi(\Lambda),i<k}(\lambda_i+\lambda_{n+1-i})+\lambda_{n+1-k}    $ \quad for each   $k\in \Psi(\Lambda)$.
\item $\displaystyle \psi(\Lambda)=  \sum_{i\in \Psi(\Lambda)}(\lambda_i+\lambda_{n+1-i})+\sum_{j=p(\Lambda)+1}^{q(\Lambda)} \lambda_{n+1-j}$ \quad  (the last summation appears if $q(\Lambda) > p(\Lambda)$).
\end{itemize}

And now we present, in chronological order, the sets that belong to this group:
\begin{enumerate}[(a)]

\item  The Sule\v{\i}manova criterion~\cite{Su} gives rise to the set  
$$
\Pi_{\text{Su}}\equiv \Big\{(\lambda_1,\ldots,\lambda_n) \in \Pi_\mathbb{R} :  \lambda_1\geq 0 > \lambda_2\geq \cdots \geq \lambda_n   \Big\}.
$$

\item  The Ciarlet criterion~\cite{Ci} gives rise to the set  
$$
\Pi_{\text{Ci}}\equiv\Big\{(\lambda_1,\ldots,\lambda_n) \in \Pi_\mathbb{R} :  |\lambda_i|\leq  \frac{\lambda_1}{n} \ \text{ for } i=2,\ldots,n  \Big\}.
$$

\item The Kellogg criterion~\cite{Ke} gives rise to the  set  

$$
\Pi_{\text{Ke}}\equiv \Big\{
(\lambda_1,\ldots,\lambda_n) \in \Pi_\mathbb{R} :
\text{ if } \Gamma=(\lambda_2,\ldots,\lambda_n) \text{ then } \lambda_1 \geq  -\psi(\Gamma)  \text{ and }  \lambda_1 \geq  -\psi_k(\Gamma)  \ \forall \,  k\in \Psi(\Gamma)  \Big\}.
$$

\item The Salzmann criterion~\cite{Sa} gives rise to the  set  

$$
\Pi_{\text{Sa}}\equiv \left\{
\Lambda=(\lambda_1,\ldots,\lambda_n) \in \Pi_\mathbb{R} : \ \frac{\Sigma(\Lambda)}{n}  \geq  \frac{\lambda_i+\lambda_{n-i+1}}{2}   \ \text{ for } i=2,\ldots,  \lfloor {\frac{n+1}{2}} \rfloor  \right\}.
$$

\item The Fiedler criterion~\cite{Fi} gives rise to the  set  

$$
\Pi_{\text{Fi}}\equiv \left\{
\Lambda=(\lambda_1,\ldots,\lambda_n) \in \Pi_\mathbb{R} : \   \lambda_1 + \lambda_n + \Sigma(\Lambda) \geq 
\sum_{ 2\leq i \leq n-1}\frac{|\lambda_i+\lambda_{n-i+1}|}{2}   \  \right\}.
$$

\item The Soto-1 criterion~\cite{So1} gives rise to the  set  

$$
\Pi_{\text{So$_1$}}\equiv \Big\{ \Lambda=(\lambda_1,\ldots,\lambda_n) \in \Pi_\mathbb{R} :  \lambda_1 +\lambda_n \geq   -\psi( \Lambda) \Big\}.
$$

\end{enumerate}

\item[Group 2.] {\bf Sets $\Pi_{\mathcal{C}}$ whose lists contain a partition that satisfies some conditions.}

It is necessary to introduce some notation.  If $\Lambda_i\in \mathbb{R}^{n_i}_{\downarrow}$ for $i=1,\ldots,k$ then $\Lambda_1\cup \cdots \cup \Lambda_k$ denotes the list of $\mathbb{R}^{n_1+\cdots+n_k}_{\downarrow}$ that contains all the reals of the lists $\Lambda_1, \ldots, \Lambda_k$. For example
$$(9,-1)\cup(5,3,-4)\cup(3,3,-1,-7)=(9,5,3,3,3,-1,-1,-4,-7).$$
If $\Lambda=\Lambda_1\cup \cdots \cup \Lambda_k$ then we  say that $\Lambda_1\cup \cdots \cup \Lambda_k$ is a \emph{partition} of $\Lambda$.

For any  $\Lambda= (\lambda_1,\ldots,\lambda_n) \in \Pi^n_\mathbb{R}$ we will denote by $\Lambda^+$ the sublist  that contains all nonnegative values of $\Lambda$
and we will denote by $\Lambda^-$ the sublist that contains all negative values of $\Lambda$.
Note that $\Lambda=\Lambda^+\cup \Lambda^-$.

Now we write, in chronological order,  the sets that belong to this group:

\begin{enumerate}[(a)]

\item The Sule\v{\i}manova-Perfect criterion~\cite{Su,Pe1} gives rise to the  set  
$$
\Pi_{\text{SP}} \equiv\Big\{\Lambda \in \Pi_\mathbb{R} : \exists \text{ a partition }  \Lambda=\Lambda_1\cup\cdots\cup\Lambda_k  \text{ such that } \Lambda_1,\ldots,\Lambda_k \in  \Pi_{\text{Su}}  \Big\}.
$$

 \item The Perfect-1 criterion~\cite{Pe1} gives rise to the set  
$$
\Pi_{\text{Pe$_1$}}\equiv \Big\{
\Lambda \in \Pi_\mathbb{R} : \exists \text{ a partition } \Lambda=(\alpha,\beta)\cup \Lambda_1\cup\cdots\cup\Lambda_k \text{ such that } \alpha=\rho(\Lambda),\ \beta\leq 0,  \text{ and }  
$$
$$
 \text{ for } i=1,\ldots,k \ \ 
\Lambda_i=(\lambda_i,\lambda_{i1},\ldots,\lambda_{it_i}) 
\text{ with } \Sigma(\Lambda_i)\leq 0, \   \lambda_{i1},\ldots,\lambda_{it_i}\leq 0 \leq \lambda_i, \text{ and }   \lambda_i+\beta\leq 0
\Big\}.
$$

\item The Borobia criterion~\cite{Bo} gives rise to the  set  
$$
\Pi_{\text{Bo}}\equiv \Big\{
\Lambda \in \Pi_\mathbb{R} : \exists \text{ a partition } \Lambda^-=\Lambda_1\cup\cdots\cup\Lambda_k \text{ such that } \Lambda^+\cup(\Sigma(\Lambda_1))\cup \cdots\cup (\Sigma(\Lambda_k)) \in \Pi_\text{Ke}  \Big\}.
$$

\end{enumerate}


\item[Group 3.] {\bf Sets $\Pi_{\mathcal{C}}$ whose lists are defined recursively. }

On this category appear the subsets of $\Pi_\text{RNIEP}$ associated to four different criteria whose authors are: (a) Soules~\cite{Sou} (we consider the extended Soules criterion as presented in Section 2.1 of~\cite{SE}); (b) Borobia, Moro and Soto~\cite{BMS}; (c) Soto~\cite{So}; and (d) \v{S}migoc and Ellard~\cite{Sm,SE}.  The exposition of these criteria is quite elaborate. So we will refer to~\cite{SE} where it is made a detailed presentation of each  criterion and it is proved that all the four criteria are equivalent, that is, that the four sets   associated to the criteria are equal: $\Pi_\text{Sou}=\Pi_\text{BMS}=\Pi_\text{So}=\Pi_\text{SE}$.   Therefore here we only need to expose  one of them. We have chosen the criterion given by Borobia, Moro and Soto since   it has the simplest recursive definition. Consider the following assertions:
\begin{enumerate} [(i)] 
\item If $\Lambda_1\in \Pi_{\text{RNIEP}}$ and $\Lambda_2\in \Pi_{\text{RNIEP}}$  then $\Lambda_1\cup \Lambda_2\in \Pi_{\text{RNIEP}}$.
\item If $(\lambda_1,\lambda_2,\ldots,\lambda_n)\in \Pi_{\text{RNIEP}}$  then $(\lambda_1+\epsilon,\lambda_2,\ldots,\lambda_n)\in \Pi_{\text{RNIEP}}$ for any $\epsilon>0$.
\item If $(\lambda_1,\ldots,\lambda_i,\ldots,\lambda_n)\in \Pi_{\text{RNIEP}}$  then $(\lambda_1+\epsilon,\ldots,\lambda_i \pm \epsilon,\ldots,\lambda_n)\in \Pi_{\text{RNIEP}}$ for any $\epsilon>0$.
\end{enumerate}
That (i) and (ii) are true is  well known, and the proof of  (iii) is due to  Guo~\cite{Gu}.  A list $(\lambda_1,\ldots,\lambda_n)$ is   \emph{C-realizable} if it may be obtained by starting with the $n$ trivially realizable lists (0),(0),\ldots,(0) and then using  (i), (ii) and (iii) any number of times in any order. Associated to this recursive construction is the set  
$$\Pi_{\text{BMS}} \equiv\{ \Lambda \in \Pi_\mathbb{R} : \Lambda \text{ is C-realizable} \}.$$
An interesting and open question about $\Pi_{\text{BMS}}$ is if the subsets  $\Pi^n_{\text{BMS}} =\Pi_{\text{BMS}}\cap  \Pi^n_\mathbb{R}$ could  be  characterized by a list of linear inequalities for any $n$.

\item[Group 4.] {\bf Sets $\Pi_{\mathcal{C}}$ whose lists are dependent on the existence of  a nonnegative matrix with  prescribed diagonal and  spectrum.}

Again it is necessary  to introduce some notation.  If $\Lambda_2\in \mathbb{R}^{n_2}_{\downarrow}$ is a sublist of $\Lambda_1\in \mathbb{R}^{n_1}_{\downarrow}$ then $\Lambda_1\setminus \Lambda_2\in \mathbb{R}^{n_1-n_2}_{\downarrow}$ denotes the sublist of $\Lambda_1$ such that $\Lambda_1=\Lambda_2\cup (\Lambda_1\setminus \Lambda_2)$. For example
$$(8,6,3,3,3,-4,-4,-6,-7)\setminus(8,3,-4,-4,-7)=(6,3,3,-6).$$

The Perfect-2$^+$ criterion~\cite{Pe2} gives rise to the set
{\small$$
\Pi_{\text{Pe$_2^+$}} \equiv 
\Big\{\Lambda\in \Pi_\mathbb{R} : 
\Lambda=\Big(\left[\Lambda_1\setminus \big(\rho(\Lambda_1)\big)\right]\cup (\alpha_1)\Big)\cup\cdots\cup \Big(\big[\Lambda_k\setminus \big(\rho(\Lambda_k)\big)\big]\cup (\alpha_k) \Big)
\text{ for some }  \Lambda_1,\ldots,\Lambda_k\in \Pi_{\text{Su}} $$
$$\text{and some } \alpha_1=\rho(\Lambda),\alpha_2,\ldots,\alpha_k\geq 0,   \text{ and } \exists  A=\begin{smat} \rho(\Lambda_1) & & * \\ & \ddots & \\ * & & \rho(\Lambda_k)\end{smat}
\geq 0 \text{ with } \sigma(A)=(\alpha_1,\ldots,\alpha_k) 
\Big\}. 
$$}


\end{description}

Marijuan, Pisonero and Soto~\cite{MPS}  analyzed the relationships between the sets $\Pi_{\mathcal{C}}$. Recently,  Ellard and \v{S}migoc~\cite{SE}  have completed the analysis. Figure 1 reflects all  relationships and updates the original map found in~\cite{MPS}.

\begin{figure}[ht] \vspace{-0.5cm}
\centering
\includegraphics[width=16cm]{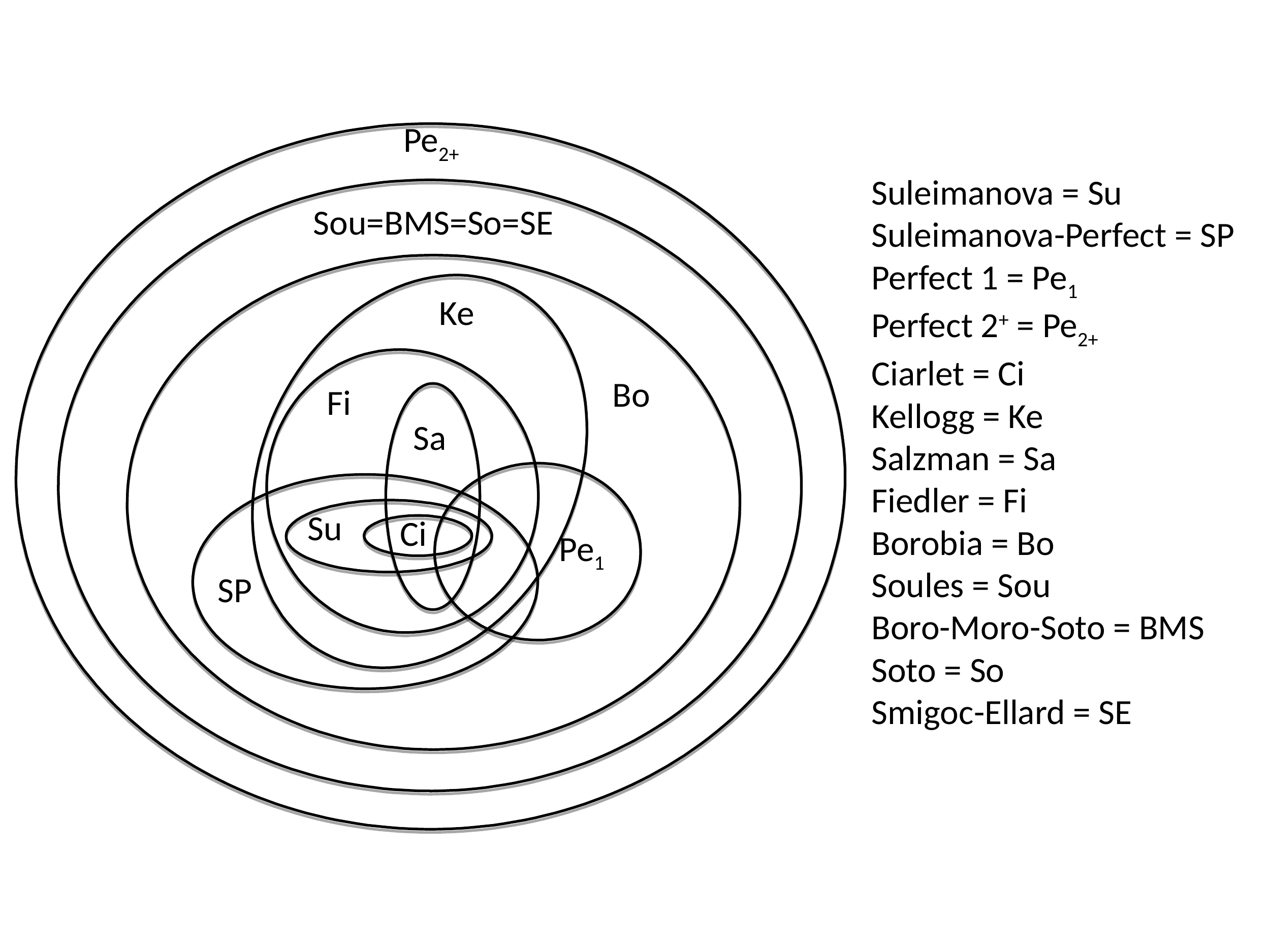}
\vspace{-15mm}
\caption[]{\small{$\Pi_\text{SP} \subset \Pi_\text{Bo} \subset \Pi_\text{Sou}=  \Pi_\text{BMS}= \Pi_\text{So}= \Pi_\text{SE} \subset \Pi_\text{Pe$_{2^+}$}  \subset \Pi_\text{RNIEP}$}}
\end{figure}

\section{The {$(\Pi_1,\Pi_2)-$Problem}}

Complexity classes (P, NP, NP-complete, NP-hard) are collections of decision problems: problems whose inputs can be answered by `yes' or `no'. Actually a decision problem is completely described by the inputs for which the answer is `yes'. To be more precise, consider two sets $\Pi_1$ and $\Pi_2$  such that $\Pi_1 \subset \Pi_2$. The \emph{$(\Pi_1,\Pi_2)-$Problem} is the decision problem in which the input is an element $\Lambda\in \Pi_2$ and the output is `yes' if $\Lambda\in \Pi_1$ or  `no' if $\Lambda \not \in \Pi_1$. The set $\Pi_1$ will be the set we are interested in, and the set $\Pi_2$ can be thought as the \emph{context} of the problem (that is, a set that imposes some minimum requirements to its elements to discard trivial non-elements of $\Pi_1$). Now we will present the decision problems that will appear in this work.

%
%

\subsubsection*{The $(\Pi_{\text{RNIEP}},\Pi_\mathbb{R})-${Problem}}

We are interested in the decision version of the RNIEP, that is, in the $(\Pi_{\text{RNIEP}},\Pi_\mathbb{R})-$\emph{Problem} for which  the input is an element $\Lambda\in \Pi_\mathbb{R}$ and the output is `yes' if $\Lambda\in \Pi_{\text{RNIEP}}$ or  `no' if $\Lambda \not \in \Pi_{\text{RNIEP}}$. 

\begin{example} 
The following are two  instances of the $(\Pi_{\text{RNIEP}},\Pi_\mathbb{R})-$Problem:
\begin{enumerate}[(i)]
\item  $(6,-3)\in \Pi_\mathbb{R}$ is a yes-instance  of the $(\Pi_{\text{RNIEP}},\Pi_\mathbb{R})-$Problem since   it is the spectrum of $\begin{smat} 1 & 5 \\ 4 & 2 \end{smat}$.
\item  $\Lambda=(3,3,-2-2,-2)\in \Pi_\mathbb{R}$  is a no-instance of the $(\Pi_{\text{RNIEP}},\Pi_\mathbb{R})-$Problem.  Suppose $A$ is a nonnegative matrix of order 5 with spectrum $\Lambda$. By the Perron-Frobenius Theorem, $A$ is reducible and $\Lambda$ can be partitioned into two nonempty lists each one being the spectrum of a nonnegative matrix with Perron eigenvalue equal to 3. This is not possible since one of the sublists must contain numbers with negative sum.  So $\Lambda$ is a no-instance.
\end{enumerate}
\end{example}


\subsubsection*{The $(\Pi_{\mathcal{C}},\Pi_\mathbb{R})-${Problem} where $\mathcal{C}$ is a criterion of realizability}

For each  criterion $\mathcal{C}$ of realizability we have constructed the set  $\Pi_{\mathcal{C}}$ where  $\Pi_{\mathcal{C}}\subset \Pi_{\text{RNIEP}}\subset \Pi_\mathbb{R}$. So, associated to $\mathcal{C}$ we have  the  $(\Pi_{\mathcal{C}},\Pi_\mathbb{R})-$\emph{Problem}, a decision problem for which  the input is an element $\Lambda\in \Pi_\mathbb{R}$ and the output is `yes' if $\Lambda\in \Pi_{\mathcal{C}}$ or  `no' if $\Lambda \not \in \Pi_{\mathcal{C}}$. 

\begin{example} 
We will consider the same instance   for two different $(\Pi_\mathcal{C},\Pi_\mathbb{R})-$Problems:
\begin{enumerate}[(i)]
\item  $(4,2,-3,-3)\in \Pi_\mathbb{R}$ is a no-instance  of the $(\Pi_{\text{SP}},\Pi_\mathbb{R})-$Problem  since it can not be partitioned in Sule\v{\i}manova sets.  
\item  $(4,2,-3,-3)\in \Pi_\mathbb{R}$ is a yes-instance  of the $(\Pi_{\text{BMS}},\Pi_\mathbb{R})-$Problem because of  the sequence:
\begin{enumerate}[1.]
\item $(0),(0),(0),(0)$.
\item $(0,0),(0),(0)$.
\item $(3,-3),(0),(0)$.
\item $(3,-3),(0,0)$.
\item $(3,-3),(2,-2)$.
\item $(3,2,-2,-3)$.
\item $(4,2,-3,-3)$.
\end{enumerate}
\end{enumerate}
\end{example}

\begin{remark} \label{restrictions}
In the $(\Pi_{\text{RNIEP}},\Pi_\mathbb{R})-${Problem} and in the $(\Pi_{\mathcal{C}},\Pi_\mathbb{R})-${Problem}  the context is $\Pi_\mathbb{R}$. To analyze the complexity of a decision problem it is important that the context has as few restrictions as possible, otherwise we are hiding part of the complexity since the instances are already preselected. In our case the restrictions are the following: $(i)$ the numbers are ordered, $(ii)$ $\Sigma (\Lambda)\geq 0$, and $(iii)$ $\rho(\Lambda)=\lambda_1$. 
 
What happens if the context is the set $\mathbb{R}\cup \mathbb{R}^2\cup \cdots$ of   lists of real numbers without further restrictions? For a list with $n$ elements the cost of ordering its elements is $n \log(n)$, the cost of checking condition $(ii)$ is $n$, and the cost of checking condition $(iii)$ is unitary since it is only necessary to check that $\lambda_1\geq |\lambda_n|$. So the overall process is $n \log(n)$. This is the hidden  part of the complexity when the context is $\Pi_\mathbb{R}$.

Having said that, the reason of considering $\Pi_\mathbb{R}$ as the context is because it makes the exposition clearer.


\end{remark}
\subsubsection*{The Partition Problem}

Let $\Pi_\mathbb{N}$ be  the  set of lists of non-increasing positive integers, that is,  
$$
\Pi_\mathbb{N}=\Pi^1_\mathbb{N}\cup \Pi^2_\mathbb{N}\cup \cdots \quad \text{where} \quad \Pi^n_\mathbb{N} \equiv \big\{(i_1,\ldots,i_n) : i_1,\ldots,i_n\in \mathbb{N}; \ i_1\geq \cdots \geq i_n>0   \big\}.
$$
And consider the set
$$
\Pi_{\text{PP}}\equiv\{I\in \Pi_\mathbb{N} : \exists \text{ a partition }  I=J\cup K \text{ such that  }  \Sigma(J)=\Sigma(K)\}.
$$
As $\Pi_{\text{PP}}\subset \Pi_\mathbb{N}$ then it makes sense to consider the $(\Pi_{\text{PP}},\Pi_\mathbb{N})-$Problem. Indeed this is a well known decision problem that in the literature is known as the \emph{Partition Problem}\footnote{For an interesting and nontechnical presentation of the Partition Problem see~\cite{Ha}.}. The input of the Partition Problem is usually a list of unordered positive integers, but the restriction to ordered list does not change the complexity of the Partition Problem. 

In what follows we will use $(\Pi_{\text{PP}},\Pi_\mathbb{N})-$Problem or Partition Problem  interchangeably.

\begin{example}  The following are two  instances of the Partition Problem:

\begin{enumerate}[(i)]
\item $(9,6,4,4,2,1)\in \Pi_\mathbb{N}$ is a yes-instance of the $(\Pi_{\text{PP}},\Pi_\mathbb{N})-$Problem since  $$(9,6,4,4,2,1)=(9,4)\cup(6,4,2,1) \quad  \text{with } 9+4=6+4+2+1.$$
\item   $(8,6,4,1)\in\Pi_\mathbb{N}$ is a no-instance of  the $(\Pi_{\text{PP}},\Pi_\mathbb{N})-$Problem since  the sum of its integers is odd.
\end{enumerate}
\end{example}

\section{NP-hardness of the RNIEP}\label{CompCompl}

A decision problem  is in the class {\bf NP-hard} when every decision problem in the class {\bf NP} (nondeterministic polynomial-time) can be reduced in polynomial time to it\footnote{A good reference for computational complexity theory  is~\cite{AB}.}.
We will  prove that the $(\Pi_{\text{RNIEP}},\Pi_\mathbb{R})-$Problem  is  NP-hard. 
Actually we will see that   for any arbitrary set $X$ such that $\Pi_{\text{SP}} \subseteq X \subseteq \Pi_{\text{RNIEP}}$ the  $(X,\Pi_\mathbb{R})-$Problem is  NP-hard. This will be done by using the technique of reducing a problem that is known to be NP-hard, the Partition Problem, to our decision problem. 


 
\begin{lemma}  \label{PPisReducibletoX}
The $(\Pi_{\text{PP}},\Pi_\mathbb{N})-$Problem is reducible to the  $(X,\Pi_\mathbb{R})-$Problem for   $\Pi_{\text{SP}} \subseteq X \subseteq \Pi_{\text{RNIEP}}$. 
\end{lemma}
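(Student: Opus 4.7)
The plan is to exhibit a polynomial-time many-one reduction from the Partition Problem to the $(X,\Pi_\mathbb{R})$-Problem. Given an input $I=(i_1,\ldots,i_n) \in \Pi_\mathbb{N}$ with sum $S=\Sigma(I)$, I would first deal with the trivial cases: if $S$ is odd, or if $i_1 > S/2$, then $I$ admits no equal-sum partition for obvious reasons, and I would output a fixed no-instance of the $(X,\Pi_\mathbb{R})$-Problem---for example $(3,3,-2,-2,-2)$, which lies in $\Pi_\mathbb{R}\setminus \Pi_{\text{RNIEP}}$ (and hence outside $X$) by the Perron-Frobenius reducibility argument already sketched in the introductory examples. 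Otherwise, I set $s=S/2$ and output $\Lambda = (s, s, -i_1, -i_2, \ldots, -i_n)$. In this branch $s \geq i_1$, so the list is monotonically non-increasing; moreover $\Sigma(\Lambda) = 2s - S = 0 \geq 0$ and $\rho(\Lambda) = s = \lambda_1$, so $\Lambda \in \Pi_\mathbb{R}$. The whole transformation is clearly polynomial in the bit-length of $I$.

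For the forward implication, if $I = J \cup K$ with $\Sigma(J) = \Sigma(K) = s$, let $\Lambda_J$ be the list obtained from $J$ by negating every element and prepending $s$, and define $\Lambda_K$ analogously from $K$. Each of $\Lambda_J$ and $\Lambda_K$ belongs to $\Pi_{\text{Su}}$, since it consists of a single nonnegative value followed by negative values and has nonnegative (in fact zero) sum. Because $\Lambda = \Lambda_J \cup \Lambda_K$, the Sule\v{\i}manova-Perfect criterion yields $\Lambda \in \Pi_{\text{SP}} \subseteq X$.

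For the reverse implication, suppose $\Lambda \in X \subseteq \Pi_{\text{RNIEP}}$, so $\Lambda = \sigma(A)$ for some nonnegative matrix $A$. The Perron root of $A$ equals $s$ and has algebraic multiplicity at least two in $\Lambda$, so $A$ must be reducible, since the Perron root of an irreducible nonnegative matrix is a simple eigenvalue. Passing to Frobenius normal form, $A$ decomposes into irreducible diagonal blocks whose spectra together partition $\Lambda$. Every block contributes its Perron root as a nonnegative eigenvalue; since the only nonnegative eigenvalues in $\Lambda$ are the two $s$'s, and $\Lambda$ contains no zero (ruling out trivial $1\times 1$ zero blocks), there must be exactly two blocks, each with Perron root $s$ and each accompanied by a sublist of the negative eigenvalues $-i_1,\ldots,-i_n$. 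Writing the resulting split of those negatives as $I = J \cup K$, the nonnegativity of each block's trace gives $s \geq \Sigma(J)$ and $s \geq \Sigma(K)$; combined with $\Sigma(J)+\Sigma(K) = 2s$, this forces $\Sigma(J)=\Sigma(K)=s$, which is precisely the required partition of $I$.

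The main obstacle is the Perron-Frobenius bookkeeping in the reverse direction. I need to rule out (a) the existence of a Frobenius block consisting only of negative eigenvalues (impossible because its Perron root would itself be a nonnegative eigenvalue of the block, contradicting the sign of every entry of its spectrum) and (b) the possibility that both copies of $s$ lie in a single irreducible block (impossible because the Perron root of an irreducible nonnegative matrix is simple). Once these two cases are dismissed, the trace inequalities on each block are automatically forced to equalities, converting the reducibility structure of $A$ directly into an equal-sum partition of $I$ and closing the reduction.
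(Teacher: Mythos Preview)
Your reduction is essentially the paper's: send $I$ to $\bigl(\tfrac{\Sigma(I)}{2},\tfrac{\Sigma(I)}{2},-i_n,\ldots,-i_1\bigr)$ (note the negatives must appear in the order $-i_n,\ldots,-i_1$, not $-i_1,\ldots,-i_n$, for the list to be non-increasing), prove the forward direction via the Sule\v{\i}manova--Perfect split, and prove the reverse direction via Perron--Frobenius reducibility. The only cosmetic differences are that (i) you pre-handle the cases $i_1>\Sigma(I)/2$ and $\Sigma(I)$ odd to ensure the output lands in $\Pi_\mathbb{R}$, a technicality the paper simply glosses over, and (ii) in the reverse direction you count Frobenius blocks and squeeze the two trace inequalities, whereas the paper observes more directly that $\operatorname{tr}(A)=\Sigma(\phi(I))=0$ forces every diagonal entry of $A$---hence the trace of each block---to vanish.
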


\begin{proof}
Define the function 
$$
\begin{matrix}
\phi: & \Pi_\mathbb{N} & \longrightarrow &  \Pi_\mathbb{R} \\
& I=(i_1,\ldots,i_n) & \mapsto & \phi(I)=(\frac{\Sigma(I)}{2},\frac{\Sigma(I)}{2},-i_n,\ldots,-i_1)
\end{matrix}
$$
It is clear that the  transformation of $I$ into $\phi(I)$ is  done in linear time with respect to the size of the input.  
It remains  to prove that $I\in \Pi_\mathbb{N}$ is a yes-instance for the $(\Pi_{\text{PP}},\Pi_\mathbb{N})-$Problem if and only if $\phi(I)\in \Pi_\mathbb{R}$ is a yes-instance for the $(X,\Pi_\mathbb{R})-$Problem. That is, to prove for $I\in \Pi_\mathbb{N}$ that  $I\in \Pi_\text{PP}$   if and only if $\phi(I)\in X$:

\begin{itemize}

\item If $I\in \Pi_\text{PP}$ then  $\phi(I) \in X$.

If $I\in \Pi_\text{PP}$ then there exist a partition $I=J\cup K=(j_{1},\ldots,j_{p})\cup (k_{1},\ldots,k_{q})$ with  $\Sigma(J)=\Sigma(K)=\Sigma(I)/2$. Thus 
$$
\begin{matrix}
\phi(I)=(\frac{\Sigma(I)}{2},-j_{p},\ldots,-j_{1})\cup (\frac{\Sigma(I)}{2},-k_{q},\ldots,-k_{1})
\end{matrix}
$$
with    $(\frac{\Sigma(I)}{2},-j_{p},\ldots,-j_{1})\in \Pi_{\text{Su}}$ and $(\frac{\Sigma(I)}{2},-k_{q},\ldots,-k_{1})\in \Pi_{\text{Su}}$. Then  $\phi(I) \in \Pi_{\text{SP}} \subseteq X$.

\item If $\phi(I)\in X$ then $I\in \Pi_\text{PP}$.

If $\phi(I)\in X$ then $\phi(I)\in \Pi_\text{RNIEP}$ since $X \subseteq \Pi_{\text{RNIEP}}$.
Then there exists a nonnegative matrix $A$ whose spectrum is $\phi(I)$.  The Perron root of an irreducible nonnegative matrix is its spectral radius and has algebraic multiplicity one. As the spectral radius of $A$ is $\Sigma(I)/2$ and  it appears twice in $\phi(I)$ then  $A$ is reducible. This implies that there exists a permutation matrix $P$ such  that 
$$P^TAP=\begin{bmatrix} A_1 & * \\ 0 & A_2 \end{bmatrix}$$
where $A_1$ and $A_2$ have spectral radius $\Sigma(I)/2$.  So 
$$
\begin{matrix}
\sigma(A)=\sigma(A_1)\cup \sigma(A_2)=(\frac{\Sigma(I)}{2},-\alpha_{1},\ldots,-\alpha_{r})\cup (\frac{\Sigma(I)}{2},-\beta_{1},\ldots,-\beta_{s})=\phi(I).
\end{matrix}
$$
Therefore  $I=(\alpha_{r},\ldots,\alpha_{1})\cup(\beta_{s},\ldots,\beta_{1})$. As $A\geq 0$  and the trace of $A$ is zero (this is because $\Sigma({\phi(I)})=0$)  then all the entries on the diagonal of $A$ are equal to zero, and so all the entries on the diagonals of $A_1$ and  $A_2$ are   equal to zero. So the traces of  $A_1$ and  $A_2$ are equal to zero and then
$$
\begin{matrix}
\frac{\Sigma(I)}{2}=\alpha_{1}+\cdots+\alpha_{r}=\beta_{1}+\cdots+\beta_{s}
\end{matrix}
$$ 
and  so $I\in \Pi_{\text{PP}}$. 

\end{itemize}
\end{proof}

When a problem that is known to be NP-hard (like the famous list of Karp's 21 NP-complete problems~\cite{Ka}, which includes the partition problem) is reducible to a new problem, then automatically, the new problem becomes NP-hard because of a standard argument that we reproduce below.

\begin{theorem} \label{XisNPHard}
The  $(X,\Pi_\mathbb{R})-$Problem is NP-hard for   $\Pi_{\text{SP}} \subseteq X \subseteq \Pi_{\text{RNIEP}}$.
\end{theorem}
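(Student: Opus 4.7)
The plan is to deduce the theorem immediately from Lemma \ref{PPisReducibletoX} via the standard transitivity argument for polynomial-time reductions. The Partition Problem is one of Karp's original 21 NP-complete problems, hence in particular NP-hard; that is the only external fact I need to cite.

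Concretely, I would proceed as follows. Let $P$ be an arbitrary decision problem in NP. Since the Partition Problem is NP-hard, there is a polynomial-time reduction $f$ from $P$ to the $(\Pi_{\text{PP}},\Pi_\mathbb{N})$-Problem. By Lemma \ref{PPisReducibletoX}, the map $\phi$ defined there is a polynomial-time reduction from the $(\Pi_{\text{PP}},\Pi_\mathbb{N})$-Problem to the $(X,\Pi_\mathbb{R})$-Problem (indeed $\phi$ is computed in linear time, and the correctness equivalence $I\in\Pi_{\text{PP}} \Longleftrightarrow \phi(I)\in X$ was established there for every $X$ with $\Pi_{\text{SP}}\subseteq X\subseteq \Pi_{\text{RNIEP}}$). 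The composition $\phi\circ f$ is then a polynomial-time reduction from $P$ to the $(X,\Pi_\mathbb{R})$-Problem, because the composition of two polynomial-time computable functions is polynomial-time computable, and correctness of the composed reduction follows by chaining the two biconditionals. As $P$ was an arbitrary problem in NP, this shows that the $(X,\Pi_\mathbb{R})$-Problem is NP-hard.

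There is essentially no obstacle: all the real work was done inside Lemma \ref{PPisReducibletoX}, where one had to design the transformation $\phi$ and verify the nontrivial direction (if $\phi(I)\in X$ then $I\in\Pi_{\text{PP}}$) using the Perron--Frobenius reducibility argument together with the vanishing-trace observation. Once that lemma is in hand, the present theorem is only an invocation of the general principle ``NP-hard problem reduces in polynomial time to a new problem implies the new problem is NP-hard.'' The only minor point worth spelling out in the write-up is why the same proof works uniformly for every $X$ in the range $\Pi_{\text{SP}}\subseteq X\subseteq \Pi_{\text{RNIEP}}$: the forward direction of Lemma \ref{PPisReducibletoX} places $\phi(I)$ inside $\Pi_{\text{SP}}$ (hence inside $X$), while the backward direction only uses that $X\subseteq \Pi_{\text{RNIEP}}$. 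Thus the reduction is independent of the particular $X$, which is exactly the uniformity claimed in the theorem.
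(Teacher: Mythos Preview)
Your proof is correct and follows exactly the paper's approach: cite Karp for the NP-hardness of the Partition Problem, invoke Lemma~\ref{PPisReducibletoX} for the polynomial-time reduction to the $(X,\Pi_\mathbb{R})$-Problem, and conclude via transitivity of polynomial-time reductions. Your additional remark on why the argument is uniform in $X$ is a helpful clarification but not a departure from the paper's reasoning.
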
 

\begin{proof}
 All NP problems are reducible in polynomial time to the Partition Problem (see  Karp~\cite{Ka}). On the other hand, in Lemma~\ref{PPisReducibletoX} we have seen that the Partition Problem is reducible in polynomial time  to the $(X,\Pi_\mathbb{R})-$Problem.  Therefore, by the transitivity of the reduction relation, every problem in the class NP is reducible in polynomial time  to the $(X,\Pi_\mathbb{R})-$Problem. Thus the $(X,\Pi_\mathbb{R})-$Problem is NP-hard. 
\end{proof}

\begin{corollary} \label{CisNPHard} (i) The  $(\Pi_{\text{RNIEP}},\Pi_\mathbb{R})-$Problem is NP-hard.\\
(ii) The  $(\Pi_{\mathcal{C}},\Pi_\mathbb{R})-$Problem is NP-hard for  ${\mathcal{C}}$=SP, Bo, Sou, BMS,  So, SE and  Pe$_{2^+}$.   
\end{corollary}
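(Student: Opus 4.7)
The plan is to derive both parts of the corollary as direct instantiations of Theorem~\ref{XisNPHard}, which establishes NP-hardness of the $(X,\Pi_\mathbb{R})-$Problem for every set $X$ sandwiched between $\Pi_{\text{SP}}$ and $\Pi_{\text{RNIEP}}$. Once that template is in place, each statement reduces to verifying the two inclusions $\Pi_{\text{SP}} \subseteq X$ and $X \subseteq \Pi_{\text{RNIEP}}$ for the relevant $X$.

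For part (i), I would take $X = \Pi_{\text{RNIEP}}$ itself. The inclusion $X \subseteq \Pi_{\text{RNIEP}}$ is the identity, and $\Pi_{\text{SP}} \subseteq \Pi_{\text{RNIEP}}$ holds by the very definition of a realizability criterion (indeed it is the first link in the chain displayed under Figure~1). Theorem~\ref{XisNPHard} then immediately yields NP-hardness.

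For part (ii), I would apply the same theorem seven times, with $X = \Pi_{\mathcal{C}}$ for $\mathcal{C}\in\{\text{SP},\text{Bo},\text{Sou},\text{BMS},\text{So},\text{SE},\text{Pe}_{2^+}\}$. The required inclusions are read off from the chain
\[
\Pi_\text{SP} \subseteq \Pi_\text{Bo} \subseteq \Pi_\text{Sou} = \Pi_\text{BMS} = \Pi_\text{So} = \Pi_\text{SE} \subseteq \Pi_{\text{Pe}_{2^+}} \subseteq \Pi_\text{RNIEP}
\]
recorded in the caption of Figure~1 (and established in~\cite{MPS,SE}). For the leftmost choice $X=\Pi_{\text{SP}}$ the lower inclusion is trivial, while for every other listed criterion $\Pi_{\text{SP}}\subseteq \Pi_{\mathcal{C}}$ and $\Pi_{\mathcal{C}}\subseteq \Pi_{\text{RNIEP}}$ follow from consecutive links of the chain; in each case Theorem~\ref{XisNPHard} delivers NP-hardness.

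There is really no obstacle here: all technical work has been absorbed into Lemma~\ref{PPisReducibletoX} and Theorem~\ref{XisNPHard}, which were carefully stated so that any criterion $\mathcal{C}$ whose set $\Pi_{\mathcal{C}}$ is at least as large as $\Pi_{\text{SP}}$ inherits NP-hardness automatically. The only thing to double-check is that the set of criteria listed in (ii) coincides exactly with those sitting inside the sandwich identified by Figure~1, which it does.
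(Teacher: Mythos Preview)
Your proof is correct and follows exactly the same approach as the paper: both invoke Theorem~\ref{XisNPHard} after observing, via Figure~1, that each listed criterion $\mathcal{C}$ satisfies $\Pi_{\text{SP}} \subseteq \Pi_{\mathcal{C}} \subseteq \Pi_{\text{RNIEP}}$. The paper's proof is simply more terse.
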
 

\begin{proof}
It is important to notice that  all criteria ${\mathcal{C}}$=SP, Bo, Sou, BMS,  So, SE and  Pe$_{2^+}$ contain  SP  as Figure 1 shows. The result follows from Theorem~\ref{XisNPHard}.
\end{proof}

The criteria Pe$_1$ did not fit well into the scheme of Lemma~\ref{PPisReducibletoX}, so we will treat it separately.

\begin{theorem}\label{Pe1}
The $(\Pi_{\text{Pe$_1$}},\Pi_\mathbb{R})-$Problem is NP-hard. 
\end{theorem}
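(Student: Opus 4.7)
The plan is to reduce the Partition Problem to the $(\Pi_{\text{Pe$_1$}},\Pi_\mathbb{R})$-Problem, adapting the transformation $\phi$ of Lemma~\ref{PPisReducibletoX} so that it matches the rigid ``Perron pair plus Suleimanova-like pieces'' structure of Pe$_1$. Given $I=(i_1,\ldots,i_n)\in \Pi_\mathbb{N}$ with $S=\Sigma(I)$, I propose the map
$$
\phi(I) \ = \ \bigl(\,S,\ S/2,\ S/2,\ -i_n,\ -i_{n-1},\ldots,-i_1,\ -S\,\bigr).
$$
A linear-time check shows $\phi(I)\in \Pi_\mathbb{R}$: the list is nonincreasing (using $i_1\leq S$), its sum is $0$, and its first entry is $\rho(\phi(I))=S$. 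The goal is to prove $I\in \Pi_{\text{PP}}\iff \phi(I)\in \Pi_{\text{Pe$_1$}}$; NP-hardness of the $(\Pi_{\text{Pe$_1$}},\Pi_\mathbb{R})$-Problem will then follow by the same transitivity argument used in Theorem~\ref{XisNPHard}.

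For the forward direction, a Partition $I=J\cup K$ with $\Sigma(J)=\Sigma(K)=S/2$ produces the Pe$_1$ decomposition $\phi(I)=(S,-S)\cup(S/2,-J)\cup(S/2,-K)$ (with $-J$ and $-K$ reordered nonincreasingly). All Pe$_1$ conditions are immediate: $\alpha=S=\rho(\phi(I))$, $\beta=-S\leq 0$, both pieces sum to $0$, and $S/2+\beta=-S/2\leq 0$.

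The converse is the main step. Suppose $\phi(I)=(\alpha,\beta)\cup\Lambda_1\cup\cdots\cup\Lambda_k$ realizes $\phi(I)\in \Pi_{\text{Pe$_1$}}$. Then $\alpha=\rho(\phi(I))=S$, and since each $\Lambda_l$ contains exactly one nonnegative entry $\lambda_l$ and the only nonnegative entries of $\phi(I)$ besides $S$ are the two copies of $S/2$, one is forced into $k=2$ with $\lambda_1=\lambda_2=S/2$. A case split on the choice of $\beta\in\{-S,-i_1,\ldots,-i_n\}$ finishes the argument. If $\beta=-S$, then $\Sigma(\Lambda_1)+\Sigma(\Lambda_2)=0$ combined with $\Sigma(\Lambda_l)\leq 0$ forces $\Sigma(\Lambda_l)=0$ for both $l$, so the negatives in each group sum to $-S/2$, yielding a Partition of $I$. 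If $\beta=-i_j$, the bound $\lambda_l\leq -\beta=i_j$ forces $i_j\geq S/2$; placing $-S$ in, say, $\Lambda_1$ and letting $A\subseteq\{1,\ldots,n\}\setminus\{j\}$ index the remaining negatives assigned to $\Lambda_1$, the condition $\Sigma(\Lambda_2)\leq 0$ becomes $\sum_{l\in A} i_l\leq S/2-i_j$, which together with $\sum_{l\in A} i_l\geq 0$ forces $i_j=S/2$; the trivial Partition $\{i_j\}\cup(I\setminus\{i_j\})$ then witnesses $I\in \Pi_{\text{PP}}$.

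The main obstacle lies in the subcase $\beta=-i_j$ of the converse: the cap $\lambda_l\leq -\beta$ is the only explicit link between $\beta$ and the pieces, and one must tease out, from the interplay between this cap, the sum constraints $\Sigma(\Lambda_l)\leq 0$, and the forced absorption of the single very large negative $-S$ into one of the two groups, the conclusion that this ``alternative'' choice of $\beta$ corresponds back to a trivial Partition of $I$ via the singleton $\{i_j\}$ with $i_j=S/2$.
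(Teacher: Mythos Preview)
Your proof is correct. The overall plan---reduce Partition via a tailored $\phi$ and then invoke transitivity as in Theorem~\ref{XisNPHard}---is the same as the paper's, but you differ from the paper in two concrete ways. First, your reduction map uses the Perron pair $(\alpha,\beta)=(S,-S)$, whereas the paper takes $\phi(I)=(\tfrac{S}{2},\tfrac{S}{2},\tfrac{S}{2},-i_n,\ldots,-i_1,-\tfrac{S}{2})$ with Perron pair $(\tfrac{S}{2},-\tfrac{S}{2})$; your choice has the small advantage that $\phi(I)\in\Pi_\mathbb{R}$ holds automatically (since $S\ge i_1$ always), while the paper's list is only sorted when $i_1\le S/2$, a no-instance that would strictly speaking need separate handling. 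Second, for the converse direction the paper only writes down the forward decomposition and then defers to the argument pattern of Lemma~\ref{PPisReducibletoX} and Theorem~\ref{XisNPHard}, i.e.\ to the matrix-theoretic Perron--Frobenius reasoning (a nonnegative realization whose Perron root is repeated forces reducibility, and zero trace then pins down the block sums). You instead give a self-contained combinatorial argument directly on the Pe$_1$ decomposition: the forced values $\alpha=S$, $k=2$, $\lambda_1=\lambda_2=S/2$, followed by the case split on $\beta$. Both routes are valid; yours avoids invoking a realizing matrix altogether, at the price of the extra case $\beta=-i_j$, which you correctly collapse to $i_j=S/2$ and the trivial partition $\{i_j\}\cup(I\setminus\{i_j\})$.
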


We  outline the proof. First we prove that the Partition Problem is reducible to the  $(\Pi_{Pe_1},\Pi_\mathbb{R})-$Problem.  The function that gives rise to  the reduction is 
the function 
$$
\begin{matrix}
\phi: & \Pi_\mathbb{N} & \longrightarrow &  \Pi_\mathbb{R} \\
& I=(i_1,\ldots,i_n) & \mapsto & \phi(I)=(\frac{\Sigma(I)}{2},\frac{\Sigma(I)}{2},\frac{\Sigma(I)}{2},-i_n,\ldots,-i_1,-\frac{\Sigma(I)}{2})
\end{matrix}
$$
Note that 
$$
\begin{matrix}
\phi(I)=(\frac{\Sigma(I)}{2},-\frac{\Sigma(I)}{2})\cup (\frac{\Sigma(I)}{2},-j_{p},\ldots,-j_{1})\cup (\frac{\Sigma(I)}{2},-k_{q},\ldots,-k_{1})
\end{matrix}
$$
 satisfies the conditions that define $\Pi_{\text{Pe$_1$}}$.  To finish the proof  we  argue as in the proof of Theorem~\ref{XisNPHard}.

\section{{The RNIEP  and  the decision problems for rationals}}

In Section~\ref{CompCompl} we have stablished the NP-hardness of several decision problems: the ones corresponding to RNIEP and Groups 2, 3 and 4. The following natural question is to ask if we can be more specific and prove that all these decision problems are NP-complete (a decision problem is  NP-complete if it is NP-hard and NP). 
For the rest of the decision problems not treated (the ones corresponding to Group 1), we will see if they are in the class {\bf P} (polynomial-time solvable).

We require a specification. To determine if a decision problem belongs to the class   P  or to determine  if it belongs to the seemingly broader class NP  implies typically that we deal with discrete problems,  over the integers or rationals, about graphs, etc. This inadequateness of complexity theory to treat problems for real and complex numbers is well explained in~\cite{Bl}. The computational problems that arise in the RNIEP have as  domain the reals. To \emph{discretize} the $(\Pi_{\text{RNIEP}},\Pi_\mathbb{R})-$Problem  (but remain as faithful  to the original problem as possible) in what follows we will consider its rational version. Namely, if
$$
\Pi_\mathbb{Q}=\{(\lambda_1,\ldots,\lambda_n) \in \Pi_\mathbb{R} : \lambda_1,\ldots,\lambda_n\in \mathbb{Q}\} 
$$
and 
$$
\Pi_{\text{QNIEP}}= \Pi_{\text{RNIEP}} \cap \Pi_\mathbb{Q}
$$
then the $(\Pi_{\text{QNIEP}},\Pi_\mathbb{Q})-$Problem is the \emph{rational version} of the $(\Pi_{\text{RNIEP}},\Pi_\mathbb{R})-$Problem. In a similar way, for any given criterion  $\mathcal{C}$  if 
$$
\Pi_{\mathcal{C}(\mathbb{Q})}=\{(\lambda_1,\ldots,\lambda_n) \in \Pi_{\mathcal{C}} : \lambda_1,\ldots,\lambda_n\in \mathbb{Q}\}= \Pi_{\mathcal{C}} \cap \Pi_\mathbb{Q}
$$
then the $(\Pi_{\mathcal{C}(\mathbb{Q})},\Pi_\mathbb{Q})-$Problem is the \emph{rational version} of the $(\Pi_{\mathcal{C}},\Pi_\mathbb{R})-$Problem. 

If we reproduce the content of  Section~\ref{CompCompl}  considering  rationals instead of reals, then we conclude that Theorems~\ref{XisNPHard} and~\ref{Pe1} are also valid in the following rational version.

\begin{theorem} \label{XisNPHardQ}
The  $(X,\Pi_\mathbb{Q})-$Problem is NP-hard for   $\Pi_{\text{SP}(\mathbb{Q})} \subseteq X \subseteq \Pi_{\text{QNIEP}}$.
\end{theorem}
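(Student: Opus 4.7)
The plan is to observe that the reduction already built in Lemma~\ref{PPisReducibletoX} for the real case is, in fact, a reduction into the rational context, so essentially no new work is required beyond a bookkeeping check.

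First I would note that the input set of the Partition Problem, $\Pi_\mathbb{N}$, is contained in $\Pi_\mathbb{Q}$, and that the map $\phi(I)=(\Sigma(I)/2,\Sigma(I)/2,-i_n,\ldots,-i_1)$ sends $\Pi_\mathbb{N}$ into $\Pi_\mathbb{Q}$: every coordinate of $\phi(I)$ is a rational number (at worst a half-integer), and the transformation is clearly computable in linear time in the bit-size of $I$. Hence $\phi$ is a valid polynomial-time reduction \emph{into} the rational version of the problem, not merely into the real one.

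Next I would reverify the biconditional $I\in \Pi_{\text{PP}}\Longleftrightarrow \phi(I)\in X$ inside the rational world. The forward direction is unchanged: a balanced partition $I=J\cup K$ yields $\phi(I)=(\Sigma(I)/2,-j_p,\ldots,-j_1)\cup(\Sigma(I)/2,-k_q,\ldots,-k_1)$, where both sublists are Sule\v{\i}manova lists with rational entries, so $\phi(I)\in \Pi_{\text{SP}(\mathbb{Q})}\subseteq X$. For the converse I would use that $X\subseteq \Pi_{\text{QNIEP}}\subseteq \Pi_{\text{RNIEP}}$ to invoke the same Perron--Frobenius argument as in Lemma~\ref{PPisReducibletoX}: a realizing nonnegative matrix $A$ is reducible because $\Sigma(I)/2$ occurs twice with maximal modulus, its diagonal vanishes because the trace $\Sigma(\phi(I))$ equals zero, and therefore both diagonal blocks have trace zero, giving the desired partition of $I$. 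Nothing in this argument depends on the entries being real rather than rational.

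Finally, since the Partition Problem is NP-hard and the reduction $\phi$ is polynomial-time and lands in $\Pi_\mathbb{Q}$, transitivity yields NP-hardness of the $(X,\Pi_\mathbb{Q})$-Problem whenever $\Pi_{\text{SP}(\mathbb{Q})}\subseteq X\subseteq \Pi_{\text{QNIEP}}$. I do not expect any real obstacle: the proof is essentially a reread of Theorem~\ref{XisNPHard}, and the only substantive observation is the one highlighted above, namely that $\phi$ already produces rational instances, so restricting the ambient context from $\Pi_\mathbb{R}$ to $\Pi_\mathbb{Q}$ does not break the reduction.
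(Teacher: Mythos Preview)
Your proposal is correct and follows exactly the paper's approach: the paper simply states that reproducing the content of Section~\ref{CompCompl} with rationals in place of reals yields the rational version of Theorem~\ref{XisNPHard}, and you have carried out precisely that verification, noting that the reduction map $\phi$ from Lemma~\ref{PPisReducibletoX} already outputs rational lists and that both directions of the equivalence $I\in\Pi_{\text{PP}}\Leftrightarrow\phi(I)\in X$ go through unchanged.
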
 

\begin{theorem}\label{Pe1Q}
The $(\Pi_{\text{Pe$_1$}(\mathbb{Q})},\Pi_\mathbb{Q})-$Problem is NP-hard. 
\end{theorem}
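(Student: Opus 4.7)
The plan is to reuse the reduction sketched for Theorem~\ref{Pe1} and verify that it survives restriction to rationals. Take the same function
$$
\phi(I) = \Big(\tfrac{\Sigma(I)}{2},\tfrac{\Sigma(I)}{2},\tfrac{\Sigma(I)}{2},-i_n,\ldots,-i_1,-\tfrac{\Sigma(I)}{2}\Big),
$$
and note that whenever $I\in\Pi_\mathbb{N}$ every entry of $\phi(I)$ is a rational with denominator at most $2$, so $\phi(I)\in\Pi_\mathbb{Q}$ and $\phi$ is computable in polynomial time. (We may safely assume $i_1\leq\Sigma(I)/2$, since otherwise $I$ is a trivial no-instance of the Partition Problem and the reduction can short-circuit.) It suffices to prove the biconditional $I\in\Pi_{\text{PP}}\iff\phi(I)\in\Pi_{\text{Pe$_1$}(\mathbb{Q})}$; NP-hardness of the $(\Pi_{\text{Pe$_1$}(\mathbb{Q})},\Pi_\mathbb{Q})$-Problem then follows exactly as in Theorem~\ref{XisNPHard}, by invoking Karp's result and transitivity of polynomial-time reduction.

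For the forward implication, set $s=\Sigma(I)/2$; given a partition $I=J\cup K$ with $\Sigma(J)=\Sigma(K)=s$, the decomposition
$$
\phi(I) = (s,-s)\cup(s,-j_{p},\ldots,-j_{1})\cup(s,-k_{q},\ldots,-k_{1})
$$
witnesses $\phi(I)\in\Pi_{\text{Pe$_1$}}$ by taking $\alpha=s$, $\beta=-s$, and $\Lambda_1,\Lambda_2$ the two trailing sublists: each leader equals $s\geq 0$, each tail is nonpositive, each $\Sigma(\Lambda_i)=0\leq 0$, and $\lambda_i+\beta=0\leq 0$. Since all entries are rational, $\phi(I)\in\Pi_{\text{Pe$_1$}(\mathbb{Q})}$.

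For the reverse implication, I would start with an arbitrary Pe$_1$ partition $(\alpha,\beta)\cup\Lambda_1\cup\cdots\cup\Lambda_k$ of $\phi(I)$. The Perron root $s$ occurs three times in $\phi(I)$; since the only positive roles available are $\alpha$ and the leaders $\lambda_i$, two of these leaders (say $\lambda_1=\lambda_2=s$) together with $\alpha=s$ must absorb the three copies, and because no remaining entry is $\geq 0$ no further $\Lambda_i$ is possible, forcing $k=2$. The clause $\lambda_i+\beta\leq 0$ pins $\beta\leq -s$, so $\beta\in\{-s\}\cup\{-i_m:i_m\geq s\}$. The main obstacle is this case analysis. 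The clean case $\beta=-s$ splits $-i_1,\ldots,-i_n$ between $\Lambda_1,\Lambda_2$ as lists $J,K$, and $\Sigma(\Lambda_j)\leq 0$ combined with $\Sigma(\phi(I))=0$ forces $\Sigma(J)=\Sigma(K)=s$. The case $\beta=-i_m$ with $i_m>s$ collapses to a contradiction by exploiting that elements of $I$ are positive integers (so a sublist with nonpositive sum must be empty, which conflicts with the count of entries to be placed). The case $i_m=s$ similarly forces a trivial equal-sum split $I=\{i_m\}\cup(I\setminus\{i_m\})$ with $\Sigma(I\setminus\{i_m\})=s$. In every case $I\in\Pi_{\text{PP}}$, closing the biconditional.
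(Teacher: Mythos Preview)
Your proof is correct and tracks the paper's outline closely: same reduction map $\phi$, same observation that the outputs are rational, same forward witness partition, and the same closing appeal to Karp and transitivity. The one place you depart is in the reverse implication. The paper (for Theorem~\ref{Pe1}, of which Theorem~\ref{Pe1Q} is the rational restatement) leaves this step implicit by pointing back to the template of Lemma~\ref{PPisReducibletoX}: there the argument passes through $\Pi_{\text{RNIEP}}$, picks up a realizing nonnegative matrix, uses the multiplicity of the Perron root to force reducibility into blocks, and then the zero-trace condition to recover an equal-sum split of $I$. You instead argue directly from the combinatorics of the Pe$_1$ definition---the three copies of $s$ must occupy $\alpha$ and exactly two leaders, forcing $k=2$; then $\lambda_i+\beta\le 0$ pins $\beta\le -s$, and the zero total sum together with $\Sigma(\Lambda_j)\le 0$ forces the split. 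Your route is more elementary (no Perron--Frobenius) and certifies the biconditional for $\Pi_{\text{Pe}_1}$ itself rather than detouring through $\Pi_{\text{RNIEP}}$; the paper's intended route is terser because it recycles machinery already in place. You also make explicit the short-circuit for $i_1>\Sigma(I)/2$, needed so that $\phi(I)$ actually lands in $\Pi_\mathbb{Q}$---a detail the paper's sketch glosses over.
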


Finally, we are ready to review all the decision problems and determine their complexity class:

\subsubsection*{\underline{The complexity of the QNIEP}}

We do not know how to decide if the $(\Pi_{\text{QNIEP}},\Pi_\mathbb{Q})-$Problem  belongs to the class NP. The difficulty is that a certificate of membership in NP is apparently the solution to the  problem. That is, for a yes-instance $\Lambda\in\Pi_{\text{QNIEP}}$ of the  $(\Pi_{\text{QNIEP}},\Pi_\mathbb{Q})-$Problem the certificate would be a nonnegative matrix $A\geq 0$ with $\sigma(A)=\Lambda$.  As the entries of $A$ are real numbers  then to check that $\Lambda$ is the spectrum of $A$ can not be done, in general,  in polynomial time.

\subsubsection*{\underline{The complexity of the criteria of Group 1}}

\begin{theorem}
The  $(\Pi_{\mathcal{C}(\mathbb{Q})},\Pi_\mathbb{Q})-$Problem  is in  the class P for    ${\mathcal{C}}$=Su, Ci, Ke, Sa,  Fi and So$_1$.
\end{theorem}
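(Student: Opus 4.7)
The plan is to exhibit, for each criterion $\mathcal{C}\in\{$Su, Ci, Ke, Sa, Fi, So$_1\}$, an explicit polynomial-time algorithm that decides membership in $\Pi_{\mathcal{C}(\mathbb{Q})}$. The key observation, already implicit in the fact that these criteria form Group 1, is that each of them is defined by a collection of at most $O(n)$ linear inequalities whose coefficients are $\pm 1$, $\pm 1/2$, or $\pm 1/n$, and whose variables are the entries $\lambda_1,\ldots,\lambda_n$ of the input list $\Lambda\in \Pi_\mathbb{Q}$.

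First I would record that the input $\Lambda=(\lambda_1,\ldots,\lambda_n)\in \Pi_\mathbb{Q}$ is given as a list of $n$ rationals $\lambda_i=p_i/q_i$, and the size of the input is $\sum_i(\log|p_i|+\log q_i)$. Addition, subtraction and comparison of two rationals $a/b$ and $c/d$ can be performed in time polynomial in the bit-lengths of $a,b,c,d$, and an iterated sum $\sum_{i=1}^{n}p_i/q_i$ has bit-length bounded by a polynomial in the total input size (the common denominator is at most $\prod q_i$, whose bit-length is linear in the input size). Hence any expression obtained from the $\lambda_i$ by $O(n)$ additions, subtractions, divisions by $n$, and absolute values, can be evaluated and compared in time polynomial in the input size.

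Next I would go criterion by criterion. For Su and Ci the conditions are pointwise: one scans the list once and checks at most $n-1$ sign or magnitude comparisons. For Sa and Fi the required quantity $\Sigma(\Lambda)$ is computed once in $O(n)$ rational additions, and then at most $\lfloor(n+1)/2\rfloor$ (resp.\ one) additional linear inequalities are checked. For Ke and So$_1$ one first computes $p(\Lambda)$ and $q(\Lambda)$ (one scan), then the index set $\Psi(\Lambda)$ (or $\Psi(\Gamma)$ with $\Gamma=(\lambda_2,\ldots,\lambda_n)$) by a single scan over $i\le\min\{p,q\}$, then $\psi(\Lambda)$ in $O(n)$ arithmetic steps, and finally the $\psi_k$ values incrementally (each $\psi_k$ differs from the previous partial sum by $O(1)$ operations), giving at most $O(n)$ inequalities in total. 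In every case the number of inequalities is $O(n)$ and each evaluation involves $O(n)$ rational arithmetic operations on numbers whose bit-lengths remain polynomial in the input, so the overall running time is polynomial.

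There is no real obstacle; the only point that requires care is to confirm that intermediate rationals do not blow up, which is handled by the standard bound on common denominators mentioned above. Assembling these per-criterion algorithms gives the result.
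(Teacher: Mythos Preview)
Your proof is correct and follows essentially the same approach as the paper: both argue that each criterion in Group~1 is defined by at most $O(n)$ linear inequalities, which can be checked directly in polynomial time. Your version is simply more detailed, spelling out the per-criterion scans and the bit-length bound for rational arithmetic that the paper's proof glosses over with a blanket ``at most quadratic time'' claim.
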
 

\begin{proof}
If  $\Lambda=(\lambda_1,\ldots,\lambda_n)\in \Pi_\mathbb{Q}$ then $\Lambda\in \Pi_{\mathcal{C}(\mathbb{Q})}$ if and only if $\Lambda\in {\Pi}^n_{\mathcal{C}(\mathbb{Q})}$.  That is, $\Lambda\in \Pi_{\mathcal{C}(\mathbb{Q})}$ if and only if  $\lambda_1,\ldots,\lambda_n$ satisfies the linear inequalities that defines $\Pi^n_{\mathcal{C}(\mathbb{Q})}$. Observe that in all the cases ${\Pi}^n_{\mathcal{C}(\mathbb{Q})}$ is defined by a collection of at most $n$  inequalities. Therefore, the overall process to check that $\Lambda\in \Pi_{\mathcal{C}(\mathbb{Q})}$ will employ at most quadratic time with respect to the size of the input $\Lambda$. We conclude that   the $(\Pi_{\mathcal{C}(\mathbb{Q})},\Pi_\mathbb{Q})-$Problem    belongs to the class P. 
\end{proof}

\subsubsection*{\underline{The complexity of the criteria of Groups 2 and 3}}


\begin{theorem} \label{NPs}
The  $(\Pi_{\mathcal{C}(\mathbb{Q})},\Pi_\mathbb{Q})-$Problem is  NP-complete for  
${\mathcal{C}}$=SP, Bo, Sou, BMS,  So, SE and Pe$_1$.  
\end{theorem}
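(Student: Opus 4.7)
The plan is clear: NP-hardness is already in hand from Theorems~\ref{XisNPHardQ} and~\ref{Pe1Q}, so for each of the seven criteria it only remains to show membership in NP, by exhibiting for every yes-instance a certificate of polynomial bit-length that can be verified in polynomial time in the size of the rational input $\Lambda=(\lambda_1,\ldots,\lambda_n)\in\Pi_\mathbb{Q}$.

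For SP, Bo, and Pe$_1$ the certificate is the decomposition that defines membership. For SP it is a partition $\Lambda=\Lambda_1\cup\cdots\cup\Lambda_k$ with each $\Lambda_i\in\Pi_{\text{Su}}$, encoded as an assignment of each of the $n$ indices to a class; verification checks for each class that it has at most one nonnegative entry and nonnegative sum. For Bo it is a partition of $\Lambda^-$; from it one forms $\Lambda^+\cup(\Sigma(\Lambda_1))\cup\cdots\cup(\Sigma(\Lambda_k))$ in polynomial time and tests Kellogg's $O(n)$ linear inequalities (in P by the previous theorem). For Pe$_1$ it is the structured partition $(\alpha,\beta)\cup\Lambda_1\cup\cdots\cup\Lambda_k$, and all the sign, sum, and Perron conditions are linear-time checks on rationals. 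In each case the description size of the partition is $O(n\log n)$ and verification is polynomial in the bit-size of the input.

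For Sou, BMS, So, and SE the four sets coincide, so it suffices to give a polynomial certificate for one of them, and the BMS formulation is convenient. A C-realizable list admits a derivation starting from $n$ copies of $(0)$ and ending with $\Lambda$ via the rules (i), (ii), (iii); this derivation has a natural binary-tree shape with $n$ leaves (the initial zeros) and $n-1$ internal nodes, one per merge. The $\epsilon$-shifts from (ii) and (iii) applied between merges can be recorded along the branches, and their values are rational combinations of the input entries $\lambda_i$, so their bit-lengths are polynomial in the input size. The certificate is the tree together with these $\epsilon$ labels; verification simulates the construction from leaves to root and checks that at each step the hypotheses of (ii)/(iii) are satisfied and that the final list is $\Lambda$.

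The main obstacle is the Group 3 case: one must guarantee that, whenever $\Lambda$ is C-realizable, a derivation of genuinely polynomial total size exists, with both the number of auxiliary (ii) and (iii) operations between merges and the bit-lengths of all intermediate rationals polynomially bounded. The cleanest way I see to handle this is to invoke the explicit tree constructions underlying Soules' criterion and the \v{S}migoc-Ellard criterion in~\cite{Sou,SE}, where the combining procedure is canonical and every intermediate value is obtained from the input entries by a bounded amount of rational arithmetic; this pins down polynomial bit-complexity and places the criterion in NP, completing the proof.
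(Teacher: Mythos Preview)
Your proof is correct and follows the same strategy as the paper: NP-hardness from Theorems~\ref{XisNPHardQ} and~\ref{Pe1Q}, and NP membership by exhibiting the defining partition (for SP, Bo, Pe$_1$) or a C-realizability derivation (for BMS, hence Sou, So, SE) as a polynomially verifiable certificate. You are in fact more careful than the paper on one point: the paper simply takes ``any sequence of the three allowed moves'' as the BMS certificate and asserts polynomial-time checkability, without addressing whether such a sequence can always be taken of polynomial length and with polynomially bounded rational $\epsilon$'s; your appeal to the explicit tree constructions in~\cite{Sou,SE} to pin down a canonical polynomial-size witness is a legitimate way to close that gap.
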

\begin{proof} 
All these decision problems are NP-hard by  Theorems~\ref{XisNPHardQ} and~\ref{Pe1Q}. 

Let us see that they are also NP.  A decision problem belongs to the class NP  if for each yes-instance there exists a \emph{certificate}  that  can be checked  in  polynomial time.

Let $\Lambda\in\Pi_\mathbb{Q}$ be a yes-instance for the  $( \Pi_{\text{SP}(\mathbb{Q})},\Pi_\mathbb{Q})-$Problem. Take as certificate for $\Lambda$  any partition  $\Lambda=\Lambda_1\cup\cdots\cup\Lambda_k$   such that $\Lambda_1,\ldots,\Lambda_k \in  \Pi_{\text{Su}(\mathbb{Q})}$.  Checking that $\Lambda_1,\ldots,\Lambda_k \in  \Pi_{\text{Su}(\mathbb{Q})}$  can be done in linear time. So the $(\Pi_{\text{SP}(\mathbb{Q})},\Pi_\mathbb{Q})-$Problem belongs to the class  NP.

Let $\Lambda\in\Pi_\mathbb{Q}$ be a yes-instance for the  $(\Pi_{\text{Bo}(\mathbb{Q})},\Pi_\mathbb{Q})-$Problem. Take as certificate for $\Lambda$  any partition  $\Lambda^-=\Lambda_1\cup\cdots\cup\Lambda_k$   such that  $\Lambda^+\cup(\Sigma(\Lambda_1))\cup \cdots\cup (\Sigma(\Lambda_k)) \in \Pi_{\text{Ke}(\mathbb{Q})}$. Checking  that $\Lambda^+\cup(\Sigma(\Lambda_1))\cup \cdots\cup (\Sigma(\Lambda_k)) \in \Pi_{\text{Ke}(\mathbb{Q})}$  can be done in polynomial time. So the $(\Pi_{\text{Bo}(\mathbb{Q})},\Pi_\mathbb{Q})-$Problem belongs to the class  NP.

Let $\Lambda=(\lambda_1,\ldots,\lambda_n)\in\Pi_\mathbb{Q}$ be a yes-instance for the  $(\Pi_{\text{BMS}(\mathbb{Q})},\Pi_\mathbb{Q})-$Problem. Take as certificate for $\Lambda$  any sequence of the three allowed moves that transform  the $n$ trivially realizable lists (0),(0),\ldots,(0) into the list $(\lambda_1,\ldots,\lambda_n)$. Checking that the moves  perform this transformation can be done in polynomial time. So the $(\Pi_{\text{BMS}(\mathbb{Q})},\Pi_\mathbb{Q})-$Problem belongs to the class NP.

Consider   ${\mathcal{C}}$=Sou, So or SE. As $\Pi_{\text{BMS}(\mathbb{Q})}=\Pi_{\text{Sou}(\mathbb{Q})}=\Pi_{\text{So}(\mathbb{Q})}=\Pi_{\text{SE}(\mathbb{Q})}$ then   take as certificate  for the $(\Pi_{\mathcal{C}(\mathbb{Q})},\Pi_\mathbb{Q})-$Problem  the same   certificate than for the $(\Pi_{\text{BMS}(\mathbb{Q})},\Pi_\mathbb{Q})-$Problem. So the $(\Pi_{\mathcal{C}(\mathbb{Q})},\Pi_\mathbb{Q})-$Problem belongs to the class NP.

Let $\Lambda\in\Pi_\mathbb{Q}$ be a yes-instance for the  $( \Pi_{\text{Pe$_1$}(\mathbb{Q})},\Pi_\mathbb{Q})-$Problem. Take as certificate for $\Lambda$  any partition  $\Lambda=(\alpha,\beta)\cup \Lambda_1\cup\cdots\cup\Lambda_k$    that satisfies the condition  in the definition of $\Pi_{\text{Pe$_1$}(\mathbb{Q})}$.  Checking those conditions  can be done in linear time. So the $(\Pi_{\text{Pe$_1$}(\mathbb{Q})},\Pi_\mathbb{Q})-$Problem belongs to the class  NP.
\end{proof}

\subsubsection*{\underline{The complexity of the criterion of Group 4}}

That the  $(\Pi_{\text{Pe$_{2^+}$}(\mathbb{Q})},\Pi_\mathbb{Q})-$Problem is NP-hard is an immediate  consequence of Theorem~\ref{XisNPHardQ}.  But as in the case of the $(\Pi_{\text{QNIEP}},\Pi_\mathbb{Q})-$Problem, for the $(\Pi_{\text{Pe$_{2^+}$}(\mathbb{Q})},\Pi_\mathbb{Q})-$Problem we do not know if it belongs to the class NP, since a certificate of membership in NP includes apparently a nonnegative matrix with prescribed diagonal and spectrum.

\subsubsection*{\underline{Discussion}}
The situation described in the complexity for the QNIEP and for the criterion of Group 4 lead us to ask the following question: 
\begin{quote} \em
Let $A$ be  a square nonnegative matrix  whose  eigenvalues are rational numbers. 
Does  there always exist a rational nonnegative matrix $B$ with the same spectrum than $A$?
\end{quote}


 This question resembles the one posed by Cohen and Rothblum~\cite{CR} related to the nonnegative matrix factorization of a rational nonnegative matrix. This question is restated by Vavasis~\cite{Va} as follows: ``\emph{Suppose an $m\times n$ rational matrix $A$ has nonnegative rank $k$ and a corresponding nonnegative factorization $A =W H$, $W \in \mathbb{R}^{m\times k}$, $H \in \mathbb{R}^{k\times n}$. Is it guaranteed that there exist rational $W, H$ with the same properties?}'' Interestingly,  Vavasis proves that the nonnegative matrix  factorization  is NP-hard.


Now let us consider  a similar question   focused on the coefficients of the characteristic polynomial. We will use the result of Kim, Ormes and Roush~\cite{Kim} who proved the Spectral Conjecture of Boyle and Hendelman~\cite{Boy} for rationals. Suppose that A is a positive matrix (no necessarily with rational entries) of order $n$ whose characteristic polynomial $f(\lambda) $ has rational coefficients. The Spectral Theorem for rationals implies that  there exists a nonnegative matrix with rational entries of a sufficient large order $N$  whose characteristic polynomial is $\lambda^{N-n}f(\lambda)$.  This  lead us to also ask the following:

\begin{quote} \em
Let A be a square nonnegative matrix whose characteristic polynomial has rational coefficients. 
Does there always exist a rational nonnegative matrix B with the same characteristic polynomial than A?%
\end{quote}

\paragraph{Acknowledgements:} We thank an anonymous referee who suggested to include the last question.

\bibliographystyle{plain}

\end{document}